\journal{arXiv}
\newtheorem{theorem}{Theorem}
\newtheorem{definition}[theorem]{Definition}
\tikzset{markovstate/.style={shape=circle,draw=black,align=center,inner sep=0,minimum width=7mm}}
\tikzset{markovstate1/.style={shape=circle,draw=black,align=center,inner sep=0,minimum width=16mm}}
\tikzset{markovstate2/.style={shape=circle,draw=black,align=center,inner sep=0,minimum width=16mm, ultra thick, dotted}}
\tikzset{markovedge/.style={-latex}}
\tikzset{label/.style={}}
\tikzset{pkt/.style={draw,rectangle,fill=gray!15,minimum height=25pt,align=center,font=\footnotesize}}
\tikzset{pktlen/.style={above=-2pt,font=\scriptsize}}
\definecolor{mycolor}{RGB}{0,200,100}
\begin{document}

\begin{frontmatter}

\title{Online Graph Exploration on a Restricted Graph Class: Optimal Solutions for Tadpole Graphs}

\address[label2]{University of Vienna, Faculty of Computer Science, 1090 Vienna, Austria}
\address[label1]{ETH Zurich, 8092 Zurich, Switzerland}

\author[label1]{Sebastian Brandt}
\ead{brandts@ethz.ch}

\author[label2]{Klaus-Tycho Foerster} 
\ead{klaus-tycho.foerster@univie.ac.at}

\author[label1]{Jonathan Maurer}
\ead{maurerjo@ethz.ch}

\author[label1]{Roger Wattenhofer}
\ead{wattenhofer@ethz.ch}

\begin{abstract}
We study the problem of online graph exploration on undirected graphs, where a searcher has to visit every vertex and return to the origin.
Once a new vertex is visited, the searcher learns of all neighboring vertices and the connecting edge weights.
The goal of such an exploration is to minimize its total cost, where each edge traversal incurs a cost of the corresponding edge weight.

We investigate the problem on tadpole graphs (also known as dragons, kites), which consist of a cycle with an attached path.
The construction by Miyazaki et al.\ (The online graph exploration problem on restricted graphs, IEICE Transactions 92-D (9), 2009) can be extended to show that every online algorithm on these graphs must have a competitive ratio of $2 - \varepsilon$, but the authors did not investigate non-unit edge weights.
We show via amortized analysis that a greedy approach yields a matching competitive ratio of $2$ on tadpole graphs, for arbitrary non-negative edge weights.
Moreover, we also briefly discuss the topic of advice complexity on cycle and tadpole graphs.

\end{abstract}

\begin{keyword}
Graph Exploration \sep Online Algorithms
\end{keyword}

\end{frontmatter}

\section{Introduction}
Exploring an unknown graph is considered to be one of the fundamental problem of robotics~\cite{DBLP:conf/icra/BurgardMFST00,DBLP:conf/esa/FleischerT05}: A searcher has to visit all vertices and return to the origin.
As the searcher only has information about the subgraph already explored and the adjacent vertices, the problem is commonly studied from an online perspective.
In other words, the goal is to minimize the competitive ratio of the cost of the actually traversed tour versus a tour of minimum cost.

In this paper, we investigate online graph exploration on \emph{tadpole}~\cite{koh1980graceful} graphs (also known as \emph{dragons}~\cite{truszczynski1984graceful} or \emph{kites}~\cite{kim2006super}, see~\cite{Gallian2017}), which can be visualized as follows: start with a cycle and attach an endpoint of a path to one vertex of the cycle.

\vspace{0.1cm}
\noindent\textbf{Background.}
Rosenkrantz et al.~\cite{greedy} proved that a greedy exploration algorithm has a competitive ratio of $\Theta(\log n)$ on general graphs with $n$ vertices.
Even though their result is over 40 years old, it is not known if algorithms with a competitive ratio of $o(\log n)$ exist---the best known lower bound was $2.5-\varepsilon$~\cite{DBLP:conf/sirocco/DobrevKM12} for some time, recently improved to $10/3-\varepsilon$ for planar graphs~\cite{DBLP:journals/corr/abs-2002-10958}.
However, when all edges have unit traversal cost, a depth-first search (DFS) has a competitive ratio of $\frac{2n-2}{n}<2$, with a matching lower bound of $2-\varepsilon$ for any exploration algorithm~\cite{DBLP:journals/ieicet/MiyazakiMO09,tadpole}.
For the case of $k$ different edge weights, a hierarchical DFS provides a competitive ratio of $2k$, which can be extended to a competitive ratio of $\Theta(\log n)$ for arbitrary edge weights~\cite{DBLP:journals/tcs/MegowMS12}.
Moreover, when considering strongly connected directed general graphs, a greedy algorithm achieves an optimal competitive ratio of $n-1$~\cite{DBLP:journals/tcs/FoersterW16}.
The related problem of searching for a specific vertex was covered by Smula et al.~\cite{DBLP:conf/sirocco/KommKKS15,DBLP:phd/dnb/Smula15}, and there is also work exploring graphs with multiple agents and limited memory respective pebbles, e.g., by Disser et al.~\cite{DBLP:journals/jacm/DisserHK19}.
In the case of multiple agents, the exploration problem on tadpoles is related to searching on three rays, where the agents do not start on the junction~\cite{DBLP:journals/tcs/BrandtFRW20}.

With the general undirected weighted case being an open problem for many decades now, some articles studied graph exploration on restricted graph classes, initiated by the \texttt{ShortCut} algorithm of Kalyanasundaram and Pruhs~\cite{DBLP:journals/tcs/KalyanasundaramP94} for planar graphs.
However, Megow et al.~\cite{DBLP:journals/tcs/MegowMS12} report on a \textit{``precarious issue in the given formal implementation''} of \texttt{ShortCut}, to which end they reformulate and improve \texttt{ShortCut} in their \texttt{Blocking} algorithm, achieving a competitive ratio of $16(2g+1)$ on graphs of genus at most $g$ (e.g., 16 on planar graphs).
Very recently, Fritsch~\cite{fritsch2020online} showed a \texttt{Blocking} parameter modification to be 3-competitive on unicyclic graphs and $5/2 + \sqrt(2) \approx 3.91$-competitive on~cactus~graphs.
Moreover, Megow et al.~\cite{DBLP:journals/tcs/MegowMS12} show in an intricate construction that the \texttt{Blocking} algorithm does \emph{not} have a competitive ratio of $o(\log n)$ in general.
As such, there is no promising candidate algorithm known that could potentially break the $\log n$ barrier on general graphs, but also no insight on the existence of non-constant lower bounds.
Regarding further graph classes, Asahiro et al.~\cite{DBLP:journals/ipl/AsahiroMMY10} showed that nearest neighbor algorithms achieve a competitive ratio of~$1.5$ on cycles, also giving a lower bound of $1.25$.
The latter result was improved by Miyazaki et al.~\cite{DBLP:journals/ieicet/MiyazakiMO09} by utilizing weighted distance computations to reach a competitive ratio of $\frac{1+\sqrt{3}}{2}\approx 1.366$, with a matching lower bound of $\frac{1+\sqrt{3}}{2}-\varepsilon$.

\vspace{0.1cm}
\noindent\textbf{Motivation.}
For graphs with different edge weights, until the very recent result of Fritsch, there were no results between cycles and planar graphs, leaving a rather large gap in the competitive landscape between the ratios of $1.366$ (cycles) and $16$ (planar graphs), where the latter result is also just an upper bound.
We thus investigate a natural extension of the cycle, by joining the endpoint of a path to a cycle, also known as tadpole, dragon, or kite graphs~\cite{Gallian2017}.
Maybe interestingly, this class of graphs was also implicitly used for the lower bound construction on unit edge weight graphs in~\cite{DBLP:journals/ieicet/MiyazakiMO09,tadpole}.
We thus hope that a further investigation of these graphs will lead to a better understanding of graph exploration on special (planar) graphs.

\vspace{0.1cm}
\noindent\textbf{Contribution.}
We prove that a greedy exploration algorithm achieves a competitive ratio of 2 on tadpole graphs, for non-negative edge weights.
We also extend the construction by Miyazaki et al.~\cite{DBLP:journals/ieicet/MiyazakiMO09} to prove a lower bound of $2-\epsilon$ for unit edge weight tadpole graphs, i.e., our results are tight. We also briefly discuss the topic of advice complexity on cycle and tadpole graphs.

\vspace{0.1cm}
\noindent\textbf{Organization.}
Our article is structured as follows. We first provide a formal model in \S\ref{sec:model}, followed by a lower bound proof  in \S\ref{sec:lower} that shows there is no tadpole exploration algorithm with a competitive ratio of $2-\varepsilon$ on unweighted graphs, extending a construction from Miyazaki et al.~\cite[\S~4.2]{DBLP:journals/ieicet/MiyazakiMO09}.
We then provide a proof in \S\ref{sec:algo} that a greedy exploration results in a matching competitive ratio of 2 on tadpoles, discuss the topic of advice complexity in \S\ref{sec:advice}, concluding in \S\ref{sec:conclusion}.

\section{Model}\label{sec:model}
\noindent\textbf{Graph model.}
We consider connected undirected graphs $G=(V,E)$ with $|V|=n$ vertices and $|E|=m$ edges, denoting an edge connecting $u$ and $v$ as $(u,v)$.
Each edge $e \in E$ is equipped with a positive edge weight $c(e)$.\footnote{We can omit edge weights of 0, as these edges can be explored for free in the undirected case, i.e., one might as well ``contract'' these edges and only consider positive weights.} 
In particular, all investigated graphs will be tadpole graphs, which are graphs where one vertex has degree $1$, one vertex has degree $3$, and all other vertices have degree~2.
Hence, one can imagine that a tadpole graph consists of one cycle with $i$ vertices, to which a path (or, from now on, \emph{stem}) of $j$ vertices is attached.
As such, all tadpole graphs can be represented e.g. in the form $T_{i,j}$, with $i\geq 3$ and $j\geq 1$.

\vspace{0.1cm}
\noindent\textbf{Exploration model.}
We assume that the searcher has unlimited computational power and memory.
Each vertex is equipped with an unique identifier (ID), where exploration proceeds as follows~\cite{DBLP:journals/tcs/MegowMS12,DBLP:journals/tcs/KalyanasundaramP94}:
upon arriving at a vertex $v$, the searcher obtains the IDs of all adjacent vertices, as well as the weight of all incident edges.
In order to move to a neighboring vertex $u$ of $v$, the searcher has to pay the edge weight of $(v,u)$, even if this edge was traversed before.
The goal is to perform a closed tour (a closed walk) from the starting vertex $s \in V$ that visits all vertices in $V$, while minimizing the accumulated cost.
We call algorithms that perform such closed tours exploration algorithms.

\vspace{0.1cm}
\noindent\textbf{Competitive ratio.}
The quality of an exploration algorithm $A$ is rated by its competitive ratio.
The competitive ratio $R_{A}(G)$ on a given graph $G$ is defined as $R_{A}(G)=\frac{\text{cost}_A(G)}{\text{cost}_{\text{opt}}(G)}$, where $\text{cost}_A(G)$ is the cost of the tour of the algorithm $A$ on $G$ and $\text{cost}_{\text{opt}}(G)$ is the cost for the optimal tour. 
A competitive ratio of 1 is optimal and given two algorithms, the one with the lower competitive ratio is better. 
One can find an optimal tour cost by solving the traveling salesman problem (TSP)~\cite{travelingsalesman} on the complete graph $G'$ with $V(G')=V(G)$, where for all $u,v \in V(G'), u \neq v$, the edge weight $c((u,v))$ is the length of the shortest path between $u$ and $v$ in $G$~\cite{greedy}.
Lastly, the competitive ratio of $A$ on some graph class $\mathcal G$ is defined as the supremum of its competitive ratio over all graphs $G \in \mathcal G$.

\vspace{-2mm}

\section{Lower Bounds for Tadpole Graphs}\label{sec:lower}
\vspace{-2mm}
To provide some first intuition, we extend the lower bound construction by Miyazaki et al.~\cite[\S~4.2]{DBLP:journals/ieicet/MiyazakiMO09} to the case of tadpole graphs.
We note that Miyazaki et al.~state their theorem for general graphs, but in fact, they only require the explored graph to be either a tree or a tadpole graph.
We now extend their proof such that we can promise to the searcher that the graph to be explored is a tadpole graph, while retaining lower bounds.

\begin{theorem}[Extended construction from~\cite{DBLP:journals/ieicet/MiyazakiMO09}, \S~4.2]\label{thm:lb}
For any positive constant $1 > \varepsilon> 0$, there is no $(2 - \varepsilon)$-competitive online algorithm for unit weight tadpole graphs.
\end{theorem}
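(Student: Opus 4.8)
The plan is to use an adversary argument on unit-weight tadpole graphs, following the classical lower-bound strategy for online exploration. The key idea is that the adversary withholds the global structure of the graph and reveals it adaptively, forcing any deterministic online algorithm to make a ``wrong'' commitment that nearly doubles its cost relative to the optimum. Since the searcher only learns the neighbors and incident edge weights of a vertex upon first visiting it, the adversary can keep the graph's topology consistent with the searcher's knowledge while deferring the decision of exactly where the cycle closes and where the stem attaches until the searcher has committed to a direction of travel.

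First I would set up the adversarial instance so that, from the searcher's local viewpoint, the explored region looks like a long path (or a path that could either close into a cycle or continue into the stem). The construction of Miyazaki et al.\ for trees/tadpoles builds a graph that locally resembles a path of length roughly $L$; the searcher, committing to explore in one direction, travels out some distance before the adversary reveals that the unvisited vertices lie elsewhere, forcing a costly backtrack. The extension I would carry out is to ensure the revealed graph is genuinely a \emph{tadpole} $T_{i,j}$ with $i \geq 3$ and $j \geq 1$: I would place the degree-$3$ junction and close the cycle in whichever position maximizes the ratio between the searcher's traversed length and the optimal closed tour. The optimal tour on a tadpole traverses the stem edges twice (out and back) and the cycle edges essentially once (a single loop), so $\mathrm{cost}_{\mathrm{opt}}$ is small and well understood, whereas the adversary can inflate the algorithm's cost to nearly $2\,\mathrm{cost}_{\mathrm{opt}}$.

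The concrete steps, in order, would be: (i) fix a large parameter $n$ (or equivalently a long path length), chosen so that the $\Theta(1/n)$ additive slack is below $\varepsilon$; (ii) describe the adversary's revelation policy as a function of the searcher's moves, maintaining an invariant that the partially revealed graph can still be completed to a tadpole in at least two topologically distinct ways; (iii) argue that whichever direction the searcher commits to, the adversary completes the graph so that the searcher must retrace almost the entire explored portion, yielding traversed cost at least $(2-\varepsilon)$ times the optimum; and (iv) verify explicitly that the final revealed graph satisfies the tadpole degree constraints (one vertex of degree $1$, one of degree $3$, all others degree $2$), which is the part where the tree-based construction of Miyazaki et al.\ must be adapted. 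Computing $\mathrm{cost}_{\mathrm{opt}}$ exactly via the shortest-path TSP formulation on the completed tadpole is routine once the instance is fixed.

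The main obstacle I anticipate is step (iv) combined with the adaptivity requirement: the original construction yields a tree, and I must modify it so that closing a cycle of size $i \geq 3$ and attaching a stem of length $j \geq 1$ is always consistent with everything the searcher has seen, \emph{without} the searcher being able to detect the cycle early and shortcut through it. Concretely, the danger is that once the searcher discovers the degree-$3$ junction it may gain enough information to avoid the costly backtrack; so the adversary must delay exposing the junction (or exposing the cycle's far side) until after the irreversible commitment. I would handle this by keeping the junction and the cycle-closing edge among the yet-unvisited vertices, so that the searcher's knowledge is identical in the two alternative completions up to the moment of commitment, which is exactly what guarantees a ratio approaching $2$.
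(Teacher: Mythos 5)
There is a genuine gap, and it lies exactly where you place your main structural idea. You propose to handle the ``danger'' of the searcher discovering the degree-$3$ junction by \emph{hiding the junction} among the unvisited vertices until after the searcher's irreversible commitment, claiming that the indistinguishability of the two completions before the junction appears ``is exactly what guarantees a ratio approaching $2$.'' This is backwards. As long as every visited vertex has degree $2$, the explored region is a path, and the adversary has no way to punish a searcher that simply walks in one fixed direction: whichever tadpole completion is eventually revealed, such a searcher can traverse cycle edges essentially once and stem edges essentially twice up to $O(1)$ slack --- i.e., it is near-optimal --- \emph{until} it reaches the junction. The forcing that yields ratio $2$ can only be set up \emph{after} the junction $t$ is visited: at that moment the searcher sees two fresh unexplored branches emanating from $t$ and cannot tell which of them is the stem and which one closes the cycle back to the far frontier of the explored arc. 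The paper's construction therefore lets the searcher reach the junction early (after distance $k$) and exploits precisely this branch ambiguity; discovering the junction does not give the game away, it \emph{creates} the game. Your worry in step (iv) about the junction leaking information is thus unfounded, and your proposed remedy removes the very mechanism the lower bound needs.

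The second missing piece is the quantitative adversary rule that turns the branch ambiguity into a factor of $2$, including against searchers that hedge. The rule must be adaptive in both the branch identity \emph{and} the stem length: whichever branch the searcher explores to depth roughly $k$ is declared to be the cycle branch, and the stem is declared to be the \emph{other} branch, with length set to just one more than however far the searcher already explored it ($k'+1$ in the paper's notation). This is what makes $\mathrm{cost}_{\mathrm{opt}} \approx 2k + t_1 + 2k' + 3$ (cycle once, short stem twice) while the searcher pays $\approx 4k + 2t_1 + 4k' + 2$, since after closing the cycle it must return all the way to $t$ for the stem and then back to $s$. Your criterion ``must retrace almost the entire explored portion'' does not by itself give ratio $2$: retracing \emph{stem} edges costs the searcher nothing relative to OPT (which also traverses them twice), so the argument must guarantee that the retraced edges are \emph{cycle} edges and that the stem is short --- which is exactly what the adaptive stem-length choice accomplishes, and what your plan never specifies. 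Finally, your invariant-based step (ii) asserts but does not prove that all searcher behaviors are punished; the paper needs a separate case analysis (its Case 2, with two subcases) for searchers that double back through $s$ to attack the cycle from the other frontier, and this accounting does not follow from the one-directional analysis.
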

We note that our following proof construction heavily relies on~\cite[\S~4.2]{DBLP:journals/ieicet/MiyazakiMO09}. Our adapted main proof idea is as follows:

When starting on the cycle, after the junction vertex $t$ with degree $3$ is found, the graph looks like three paths joined at $t$ to the searcher.
The challenge for the searcher is now to find out which of the two paths form the cycle. 
However, the adversary can ``force'' the searcher to explore the cycle first, i.e., the searcher needs to go back to the earlier visited junction $t$. 
If the adversary appropriately chooses the path lengths according to the searcher's decisions, the additional tour length required can  be arbitrarily close to the optimal tour length.
We now give the promised formal proof, following Miyazaki et al.'s notation:

\begin{proof}
Let the searcher start on a vertex $s$ of degree $2$, where an adversary will construct the tadpole graph on the go.
Until the junction vertex $t$ is visited, the graph will look like two edge-disjoint paths to the searcher, consisting of the vertices $s,v_1,v_2,\ldots$ and $s,u_1,u_2,\ldots$.
Initially, the vertices $v_1,u_1$ are visible to the searcher, but yet unvisited.
The adversary keeps extending the paths, until the searcher reaches a distance of $k$ on one of the paths, w.l.o.g.\ on $u_k$.
We set $u_k$ as the junction $t$ and the adversary reveals two new vertices, adjacent to $t$, namely $q_1$ and $p_1$.
Denote by $v_{t_1}$, $t_1 < k$, the most distant visited vertex on the opposing side of $s$, where the adversary revealed a vertex $v_{t_1+1} \neq t$.
So far, the searcher traversed at least $k+2t_1$ edges.
We show the current situation in Figure~\ref{fig:lb-1}.

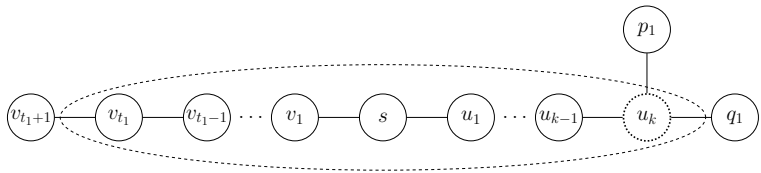
\begin{figure}[h]
\centering
	\scalebox{0.39}{
	\begin{tikzpicture}[auto]
	\node [markovstate1] (s) at (0,0) {\huge $s$};
	\node [markovstate1] (v1) at (-3,0) {\huge $v_1$};
	\node [markovstate1] (u1) at (3,0) {\huge $u_1$};
	\node [markovstate1] (uk-1) at (6,0) {\huge $u_{k-1}$};
	\node [markovstate2] (uk) at (9,0) {\huge $u_{k}$};
	\node [markovstate1] (p1) at (9,3) {\huge $p_{1}$};
	\node [markovstate1] (q1) at (12,0) {\huge $q_{1}$};
	\node [markovstate1] (vt1-1) at (-6,0) {\huge $v_{t_1-1}$};
	\node [markovstate1] (vt1) at (-9,0) {\huge $v_{t_1}$};
	\node [markovstate1] (vt1+1) at (-12,0) {\huge $v_{t_1+1}$};
	\node (dots) at (-4.5,0) {\huge $\ldots$};
	\node (dots) at (4.5,0) {\huge $\ldots$};
	
	\draw [dashed] (0,0) ellipse (11cm and 1.8cm);

	\draw [thick] (s) to (u1);
	\draw [thick] (s) to (v1);
	\draw [thick] (vt1-1) to (vt1);
	\draw [thick] (vt1) to (vt1+1);
	\draw [thick] (uk-1) to (uk);
	\draw [thick] (uk) to (p1);
	\draw [thick] (uk) to (q1);
	\end{tikzpicture}
		}
		\caption{Situation after the searcher arrived on the dotted junction vertex $u_k$, after starting on $s$. The dashed ellipse marks vertices already visited. The three vertices $v_{t_1+1}$, $p_1$, $q_1$ are visible, but not yet visited. Out of the these vertex types, it is not yet clear to the searcher which form the stem and which complete the cycle.}\label{fig:lb-1}
\end{figure}

Next, the searcher has three visible but yet unvisited vertices left: $v_{t_1+1}, p_1, q_1$.
In our construction, either $p_1$ or $q_1$ will lead to the end of the stem, but the searcher does not know which one to follow.
However, the searcher could also approach the problem from ``the other side'' and visit $v_{t_1+1}$ next.
If $p_1$ is visited, then the adversary uncovers another vertex~$p_2$, similarly, if $q_1$ is visited, then the adversary uncovers $q_2$, and if $v_{t_1+1}$ is visited, then the adversary uncovers $v_{t_1+2}$, and so on.
The adversary only needs to reveal if either type $p$ or $q$ vertices form the stem, when every vertex on the cycle is visible to the searcher.
In the following, for ease of argumentation, we will make the searcher more powerful such that once all vertices on the cycle are visible to the searcher, we will reveal the whole graph to the searcher.
%

%
As the by us controlled adversary builds the graph, we now assume that there are $k \geq 4 $ vertices $v_{t_1+1},\ldots,v_{t_1+k}$ between $t$ and $v_{t_1}$ on the cycle, leave the stem length undefined for now, just setting it to be less than $k$, and perform case distinction for the following two different cases.

\bigskip

\begin{enumerate}
	\item \label{case:u1} Until all vertices on the cycle are visible to the searcher, assume that the searcher does not go back through the junction $t$ and the start $s$ to $v_{t_1}$, but just uncovers new vertices of type $p$ and $q$. 
	\begin{itemize}
		\item W.l.o.g., we can assume that this situation occurs when the searcher is on vertex $p_{k-2}$ (where $p_{k-1}$ neighbors $p_k = v_{t_1+1}$, in turn a neighbor to $v_{t_1}$), where the furthest so far visited vertex of type $q$ is $q_{k'}$ with $k-2 > k'$.\footnote{Essentially, as soon as the searcher visits a vertex of distance $k-2$ to $t$, we decide that this vertex path is of type $p$.} So far, the searcher traversed at least $k +2 t_1 + k-2+ 2k'$ edges.
		 We now reveal the whole graph to the searcher and set the length of the stem to be $k'+1$, i.e., the vertex $q_{k'+1}$ is not yet visited.
		The current situation is shown in Figure~\ref{fig:lb-2}
		
		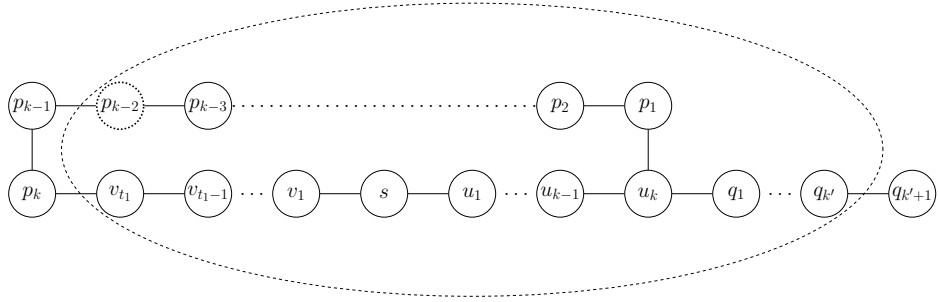
\begin{figure}[h]
\centering
	\scalebox{0.39}{
	\begin{tikzpicture}[auto]
	\node [markovstate1] (s) at (0,0) {\huge $s$};
	\node [markovstate1] (v1) at (-3,0) {\huge $v_1$};
	\node [markovstate1] (u1) at (3,0) {\huge $u_1$};
	\node [markovstate1] (uk-1) at (6,0) {\huge $u_{k-1}$};
	\node [markovstate1] (uk) at (9,0) {\huge $u_{k}$};
	\node [markovstate1] (p1) at (9,3) {\huge $p_{1}$};
	\node [markovstate1] (p2) at (6,3) {\huge $p_{2}$};
	\node [markovstate1] (q1) at (12,0) {\huge $q_{1}$};
	\node [markovstate1] (qk) at (15,0) {\huge $q_{k'}$};
	\node [markovstate1] (qk1) at (18,0) {\huge $q_{k'+1}$};
	\node [markovstate1] (vt1-1) at (-6,0) {\huge $v_{t_1-1}$};
	\node [markovstate1] (vt1) at (-9,0) {\huge $v_{t_1}$};
	\node [markovstate1] (vt1+1) at (-12,0) {\huge $p_k$};
	\node [markovstate1] (vt1+2) at (-12,3) {\huge $p_{k-1}$};
	\node [markovstate2] (vt1+3) at (-9,3) {\huge $p_{k-2}$};
	\node [markovstate1] (vt1+4) at (-6,3) {\huge $p_{k-3}$};
	\node (dots) at (-4.5,0) {\huge $\ldots$};
	\node (dots) at (4.5,0) {\huge $\ldots$};
	\node (dots) at (13.5,0) {\huge $\ldots$};
	\node (dots) at (0,3) {\huge $\ldots \ldots \ldots \ldots \ldots \ldots \ldots \ldots \ldots \ldots \ldots $};
	
	\draw [dashed] (3,1.5) ellipse (14cm and 5cm);

	\draw [thick] (s) to (u1);
	\draw [thick] (s) to (v1);
	\draw [thick] (vt1-1) to (vt1);
	\draw [thick] (vt1) to (vt1+1);
	\draw [thick] (uk-1) to (uk);
	\draw [thick] (uk) to (p1);
	\draw [thick] (uk) to (q1);
	\draw [thick] (vt1+1) to (vt1+2);
	\draw [thick] (vt1+2) to (vt1+3);
	\draw [thick] (vt1+3) to (vt1+4);
	\draw [thick] (qk) to (qk1);
	\draw [thick] (p1) to (p2);
	\end{tikzpicture}
		}
		\caption{Situation after the searcher arrived on the dotted vertex $p_{k-2}$, when the whole graph was revealed. The dashed ellipse marks vertices already visited. The only three unvisited vertices are $p_k = v_{t_1+1}$, $p_{k-1} = v_{t_1+2}$, and the end of the stem, $q_{k'+1}$.}\label{fig:lb-2}
\end{figure}

		Now, it remains to visit the vertices $p_{k-1}=v_{t_1+2}$ and $v_{t_1+1}=p_k$, the stem vertex $q_{k'+1}$, then returning to $s$.
		The shortest path to do so is exactly in this order, i.e., visit  $p_{k-1}=v_{t_1+2}$ and $v_{t_1+1}=p_k$ (2 edges), then return to $t$ ($k$ edges), then visit $q_{k'+1}$ and return to $t$ ($ 2 (k'+1)$ edges), then return to $s$ ($k$ edges).
		In total, the searcher traversed at least $k +2 t_1 + k-2+ 2k' + 2 + k + 2 (k'+1) +k = 4k + 2t_1 + 4k' +2$ edges, whereas the optimal solution traverses every cycle edge once ($2k+t_1+1$) and every stem edge twice ($2(k'+1)$), implying that $\texttt{OPT} = 2k  +t_1 + 2k' +3$.
		The competitive ratio is hence $\geq 2 - 4/(3 + 2k + t_1 + 2k')$. 
		
		As $0 \leq t_1 < k$ and $0 \leq k' < k $, for every fixed $1 > \varepsilon >0$, we achieve a competitive ratio greater than $2 - \varepsilon$ by e.g.\ selecting $k \in \mathbb{N}_{\geq 4}$ such that~$\varepsilon > 4/(3 + 2k)$, i.e., $k > -3/2+2/\varepsilon$.
	\end{itemize}
	
	\item \label{case:u2} It remains to cover the remaining case, i.e., that until all vertices on the cycle are visible to the searcher, the searcher goes at least once back through $t$ via $s$ to $v_{t_1}$.
	If the searcher is then on vertex $p_{k-2}$, we ignore the journey to $v_{t_1}$ (the searcher traversed the edges for free) and refer to Case~\ref{case:u1} above.
	Else,	w.l.o.g., assume that the two remaining unvisited vertices on the cycle are $p_{k_1+1}$ and $p_{k_1+2}$, with $k_1 < k- 2$, where the most distant vertex visited on the stem so far is (analogously as above) w.l.o.g.\ $q_{k_2}$, with $k_2 \leq k_1$.
	We now reveal the whole graph to the searcher, finishing the stem with the unvisited (but visible) vertex $q_{k_2+1}$.
	In other words, it remains to visit $p_{k_1+1}$ and $p_{k_1+2}$, $q_{k_2+1}$, and return to $s$.
	The current situation is shown in Figure~\ref{fig:lb-3}.
		
		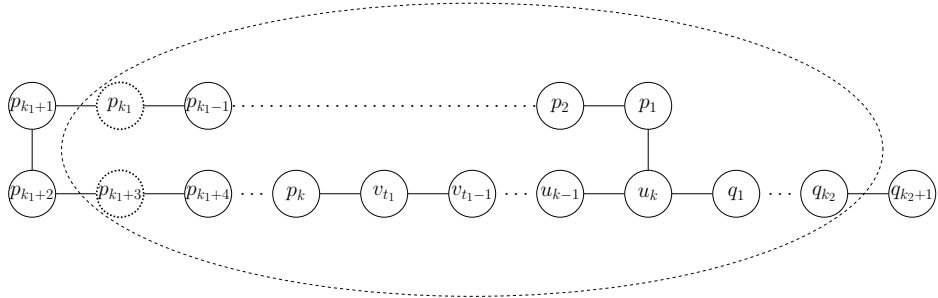
\begin{figure}[h]
\centering
	\scalebox{0.39}{
	\begin{tikzpicture}[auto]
	\node [markovstate1] (s) at (0,0) {\huge $v_{t_1}$};
	\node [markovstate1] (v1) at (-3,0) {\huge $p_k$};
	\node [markovstate1] (u1) at (3,0) {\huge $v_{t_1-1}$};
	\node [markovstate1] (uk-1) at (6,0) {\huge $u_{k-1}$};
	\node [markovstate1] (uk) at (9,0) {\huge $u_{k}$};
	\node [markovstate1] (p1) at (9,3) {\huge $p_{1}$};
	\node [markovstate1] (p2) at (6,3) {\huge $p_{2}$};
	\node [markovstate1] (q1) at (12,0) {\huge $q_{1}$};
	\node [markovstate1] (qk) at (15,0) {\huge $q_{k_2}$};
	\node [markovstate1] (qk1) at (18,0) {\huge $q_{k_2+1}$};
	\node [markovstate1] (vt1-1) at (-6,0) {\huge $p_{k_1+4}$};
	\node [markovstate2] (vt1) at (-9,0) {\huge $p_{k_1+3}$};
	\node [markovstate1] (vt1+1) at (-12,0) {\huge $p_{k_1+2}$};
	\node [markovstate1] (vt1+2) at (-12,3) {\huge $p_{k_1+1}$};
	\node [markovstate2] (vt1+3) at (-9,3) {\huge $p_{k_1}$};
	\node [markovstate1] (vt1+4) at (-6,3) {\huge $p_{k_1-1}$};
	\node (dots) at (-4.5,0) {\huge $\ldots$};
	\node (dots) at (4.5,0) {\huge $\ldots$};
	\node (dots) at (13.5,0) {\huge $\ldots$};
	\node (dots) at (0,3) {\huge $\ldots \ldots \ldots \ldots \ldots \ldots \ldots \ldots \ldots \ldots \ldots $};
	
	\draw [dashed] (3,1.5) ellipse (14cm and 5cm);

	\draw [thick] (s) to (u1);
	\draw [thick] (s) to (v1);
	\draw [thick] (vt1-1) to (vt1);
	\draw [thick] (vt1) to (vt1+1);
	\draw [thick] (uk-1) to (uk);
	\draw [thick] (uk) to (p1);
	\draw [thick] (uk) to (q1);
	\draw [thick] (vt1+1) to (vt1+2);
	\draw [thick] (vt1+2) to (vt1+3);
	\draw [thick] (vt1+3) to (vt1+4);
	\draw [thick] (qk) to (qk1);
	\draw [thick] (p1) to (p2);
	\end{tikzpicture}
		}
		\caption{Situation after the searcher arrived on either the dotted vertex $p_{k_1}$ or $p_{k_1+3}$, previously visiting the other respective vertex, when the whole graph was revealed. The dashed ellipse marks vertices already visited. The only unvisited vertices are $p_{k_1+2}=v_{t_1+k-k_1-1}$, $p_{k_1+1}=v_{t_1+k-k_1}$, and the end of the stem, $q_{k_2+1}$.}\label{fig:lb-3}
\end{figure}

	We now perform a case distinction depending on if the searcher is currently either on the vertex $p_{k_1+3}$ or the vertex $p_{k_1}$.
	\begin{enumerate}
		\item \label{case:u21} The searcher is currently on $p_{k_1+3}$. So far, the searcher traversed at least $k + 2t_1 + 2k_1 + 2k_2 + k + t_1 + 1 + k - k_1-3 = 3k + 3t_1 + 1k_1 + 2  k_2 -2 $ edges. In order to reach the end of the stem on the shortest path (visiting $p_{k_1+1},p_{k_1+2}$ on the way), $k_1+3$ (to the junction) and then $k_2+1$ edges are needed. To return to $s$ on the shortest path, $k_2+1+k$ edges must be traversed.
		Hence, the total route takes at least $3  k + 3  t_1 + k_1 + 2  k_2 -2 + k_1+3+ 2(k_2+1)+k = 4 k + 3  t_1 + 2  k_1 + 4  k_2 +3$ edge traversals.
		The optimal solutions traverses every cycle edge once ($2  k +  t_1+1$) and every stem edge twice ($2(k_2+1)$), setting setting $\texttt{OPT} = 2k  +t_1 + 2k_2 +3$.
		To obtain a lower bound on the competitive ratio, we observe that $4 k + 3  t_1 + 2  k_1 + 4  k_2 +3 > 4  k + 2  t_1 + 4  k_2 +2$, i.e., we obtain again (as in Case~\ref{case:u1}) a competitive ratio of $\geq 2 - 4/(3 + 2 k + t_1 + 2  k')$.

		\item It remains to cover the case that the searcher is currently on $p_{k_1}$.
		In this case, the searcher, after being at the junction $t$, also traveled to $q_{k_2}$ and $p_{k_1+3}=v_{t_1+k-k_1-2}$.
		So far, the searcher traversed at least $k + 2  t_1 + k_1 + 2  k_2 +2(k+t_1+k-k_1-2) = 5  k + 4  t_1 - k_1 + 2  k_2 -4 $ edges. In order to visit the last three unvisited vertices $p_{k_1+2}=v_{t_1+k-k_1-1}$, $p_{k_1+1}=v_{t_1+k-k_1}$, $q_{k_2+1}$ (the end of the stem), and return to $s$, all on a shortest path, the searcher does the following: First visit $p_{k_1+1}=v_{t_1+k-k_1}$, $p_{k_1+2}=v_{t_1+k-k_1-1}$ (2 edges), then the end of the stem ($2+k_1+k_2+1$ edges), then return to $s$ ($k_2+1+k$ edges), i.e., $k + k_1+ 2  k_2 + 6$ additional edges.
		In total, the searcher traversed at least $ 5  k + 4  t_1 - k_1 + 2  k_2 -4 + k + k_1+ 2  k_2 + 6 = 6  k + 4  t_1 + 4  k_2 +2$ edges.
		As before, the optimal solutions traverses every cycle edge once ($2  k +  t_1+1$) and every stem edge twice ($2(k_2+1)$), setting $\texttt{OPT} = 2k  +t_1 + 2k_2 +3$.
		By observing that $ 6  k + 4  t_1 + 4  k_2 +2 \geq 4  k + 2  t_1 + 4  k_2 +2$, we can again repeat the arguments from Case~\ref{case:u1}, selecting $k \in \mathbb{N}_{\geq 4}$ with $k > -3/2+1/\varepsilon$.
	\end{enumerate}
\end{enumerate}
This completes the proof by case distinction.
\end{proof}

On the algorithmic side, a depth-first search achieves a matching competitive ratio for unit weight graphs, but has no guarantees for weighted graphs.
We show 
that a greedy approach yields matching upper bounds for weighted graphs.

\section{Greedy Exploration of Tadpole Graphs}\label{sec:algo}
In this section, we will utilize an amortized analysis to prove that a greedy exploration is at most $2$-competitive on tadpole graphs. We first formalize the notion of a greedy exploration algorithm in \S\ref{subsec:greedy} along with some preliminary definitions and a simple upper cost bound, and then present our proof in \S\ref{subsec:theorem}.

\subsection{Greedy Algorithm Preliminaries}\label{subsec:greedy}

We first define how a greedy algorithm explores the graph, and then introduce the notions of a step and charging edges for a step. 
Afterwards, we show how these definitions yield a simple first upper cost bound.
We will utilize these definitions in the proof of the optimal upper bound statement in \S\ref{subsec:theorem}.

\begin{definition}
A greedy exploration algorithm proceeds as follows, until every vertex is visited: From all vertices that are known but have not yet been visited, pick a vertex $w$ to which the best known path is shortest, and visit $w$ via this shortest path. Once all vertices have been visited, return to the starting vertex $s$ along a shortest path.
\end{definition}
\smallskip
\noindent\textbf{Step by step.} 
It will be useful to introduce the notion of a \emph{step} of the algorithm, where a step is the decision which previously unvisited vertex to visit next, plus the actual visit itself. 
As such, if a graph has $n$ vertices, a greedy online exploration algorithm takes $n-1$ steps and then return to the starting vertex~$s$.
For consistency, we consider the return to $s$ as a step as well, i.e., an exploration algorithm for $n$ vertices will perform $n$ steps in total.

\smallskip
\noindent\textbf{Charging an edge for the next step.}
Consider the situation that the searcher is currently on a vertex $v$ \emph{with} an unvisited neighbor $w$.
If $v$ has more than one unvisited neighbor, let $w$ be the one that connects to $v$ with the cheapest edge.
In the next step, a greedy exploration algorithm currently on a vertex $v$ will either $a)$ visit a yet unvisited neighboring vertex $w$ by traversing $e=(w,v)$ or $b)$ visit a yet unvisited vertex $w'$ by a path $P$ that does not contain $(w,v)$, where not necessarily $w' \neq w$.
Due to the greedy nature of the exploration algorithm, in case $b)$ the cost of $P$ will be at most the cost of $e$, i.e., $c(P) \leq c(e)$ (Fig.~\ref{fig:fig}). 
Similarly, in case $a)$ the cost $c(e)$ is cheaper than any other path to a yet unvisited vertex.
Hence, for an amortized analysis, in such a situation we can say that we \emph{charge} the cost of the next step to the edge $(w,v)$.

\begin{figure}[h]
\centering
	\scalebox{0.65}{
	\begin{tikzpicture}[auto]
	\node [markovstate] (v) at (0,0) {\large $v$};
	\node [markovstate] (w) at (3,0) {\large $w$};
	\node [markovstate] (1) at (-3,0) {};
	\node [markovstate] (2) at (-6,0) {};
	\node [markovstate] (w1) at (-9,0) {\large $w'$};
	\node (dots) at (-4.5,0) {\huge $\ldots$};
	\draw [thick] (v) to node [midway, below, xshift=0mm]{\large$e$} (w);
	\draw [thick] (v) to (1);
	\draw [thick] (2) to (w1);
	\draw [dashed] (-3,0) ellipse (4cm and 0.75cm);
	\draw [thick, bend right = 30, ->, dotted] (v) to node [midway, above]{\large$P$} (w1);
	\end{tikzpicture}
		}
		\caption{Illustration of the case $b)$ when charging an edge for the next step, where the dashed ellipse marks vertices already visited. When the searcher is on $v$ and decides to visit $w'$ instead of $w$, then the cost of the dashed path $P$ is at most the cost of the edge $e=(w,v)$.}\label{fig:fig}
\end{figure}
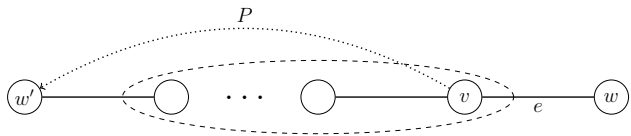

However, an exploration algorithm could also find itself in the situation that after performing a step, all neighboring vertices have been explored. 
In this case, there is no incident edge to charge:

\smallskip
\noindent\textbf{Charging the actual path.}
Consider the situation that the searcher is currently on a vertex $v$ \emph{without} an unvisited neighbor or that all vertices have already been visited, but $v \neq s$.  
Then, for the next step (which can also be the return to $s$), we say that we charge the cost of the actually taken path to said path, i.e., the cost of each traversed edge.
\smallskip
\noindent\textbf{A first upper cost bound.}
We now combine both charging ideas for a first simple upper bound, to give some intuition for the upcoming proof in \S\ref{subsec:theorem}.
To this end, we ask the question how often we cannot charge an edge for the next step---but rather have to charge the actual path taken.
By definition, this can only be the case if there is no unvisited neighbor $w$ of the current vertex $v$.
Such a situation can occur only in two cases:
\begin{enumerate}
	\item The searcher is not on $s$ and at the end of the stem, or
	\item the searcher is on the cycle, and all cycle vertices have been visited.
\end{enumerate}
Regarding the first case, if the end of the stem is $s$, then we will only be on $s$ at the beginning (where we have an unvisited neighbor) or upon completion of the algorithm.

For the second case, assume not all cycle vertices have been visited, but that on the current node $v$ (at the beginning of a step), the searcher has no unvisited neighbors. 
We now show by contradiction that this cannot be the case.
Denote the neighbors of the current vertex $v$ on the cycle by $w'$ and $w''$: when they were visited for the first time, it was either because they were $s$ or because they were at the end of a step. 
Similarly, $v$ was visited for the first time at the end of a step.
However, in order to visit both $w',w''$ without visiting $v$, all vertices between $w',w''$ on the cycle must have been visited, a contradiction.
Hence, both cases will only occur at most once until the algorithm has completed its exploration, as the searcher will not choose an already explored vertex for the endpoint of a step (unless it is the return to $s$ in the last step).
As thus, charging a path will only happen twice, and for the second time, it is the return to $s$, i.e., the final step.
Here it is important to note that the greedy algorithm will only charge an edge $e=(x,y)$ when it connects the current node (w.l.o.g.~$x$) to an unvisited node $y$. 
It can thus never charge an edge $e=(x,y)$ twice\footnote{We note that we defined charging a path differently than charging an edge, even if the path is of length $1$.}, as the searcher will only be on $x$ once at the beginning of a step.
Hence, a greedy exploration algorithm will charge every edge at most once, and charge at most two paths, i.e., the total exploration cost is at most three times all edge weights in the tadpole graph.
We will next improve this simple bound.

\subsection{Achieving 2-competitiveness on Tadpoles via a Greedy Algorithm}\label{subsec:theorem}
Our first simple upper bound at the end of \S\ref{subsec:greedy} is still far from the desired {{2-competitiveness}}: In our amortized analysis, we would like to include the cost of an edge only twice as often as an optimal solution uses it.
Furthermore, an optimal solution might skip a very expensive cycle edge.
To this end, we first investigate the optimal tour shapes and costs.
\smallskip
\noindent\textbf{Shape of the optimal tour.} Observe that the optimal tour for a tadpole graph consists of $a)$ the optimal walk through the cycle~\cite{DBLP:journals/ipl/AsahiroMMY10} and $b)$ twice the attached path (stem).
Thus, an optimal tour has one of the following two shapes:
\begin{enumerate}
	\item Traverse all the edges in the cycle once and all edges in the stem twice.
	\item Traverse all the edges in the cycle, except for one, twice, and the stem twice. The excluded edge $e_\infty$ has a weight of at least the remaining cycle.
\end{enumerate}
To break ties, we say that Shape 1 is optimal, in case both have the same cost. As such, we assume in the second case that $e_\infty$ has a weight \emph{greater} than the remaining cycle.
\smallskip
\noindent\textbf{Cost of the optimal tour.}
For our analysis, denote the $|E|=m$ edges in the tadpole graph by $e_1,\ldots,e_m$, where we set $i)$ the stem edges to be $e_1,\ldots e_j$, $1 \leq j \leq m-3$ and the cycle edges to be $e_{j+1},\ldots,e_m$, and $ii)$ $e_m = e_\infty$ in the case of Shape 2.
It follows that:
\begin{enumerate}
	\item An optimal tour of Shape 1 has a cost of $\sum_{i=1}^j{2  c(e_i)} + \sum_{i=j+1}^m{1  c(e_i)}$,
	\item an optimal tour of Shape 2 has a cost of $\sum_{i=1}^{m-1}{2  c(e_i)}$.
\end{enumerate}

\noindent Using the above optimal tour costs, we now prove the promised upper bound:

\begin{theorem}
A greedy exploration algorithm has a competitive ratio of $2$ on weighted tadpole graphs.
\end{theorem}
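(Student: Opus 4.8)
The plan is to start from the charging bound established at the end of \S\ref{subsec:greedy}: the total greedy cost is at most $\sum_{i=1}^{m} c(e_i) + c(P_1) + c(P_2)$, where every edge is charged at most once for a ``forward'' step and $P_1,P_2$ are the only two steps whose cost must be charged to the actually traversed path. From the two path-charging cases, $P_1$ is the backtrack the searcher performs after reaching the end of the stem (where it has no unvisited neighbor) and $P_2$ is the final return to $s$ once all vertices are visited. The crude estimate $c(P_i)\le\sum_i c(e_i)$ only yields a factor $3$, so the real work is to bound $P_1$ and $P_2$ tightly and to match them against the two optimal tour shapes from the cost analysis above.

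First I would record how the forward charges line up with $\texttt{OPT}$. Each stem edge and each cycle edge is forward-charged at most once, contributing $\sum_{\mathrm{stem}} c(e) + \sum_{\mathrm{cycle}} c(e)$. The \emph{second} traversal of every stem edge, which the optimum also pays for, is precisely the stem portion of $P_1$: upon reaching the stem end the searcher must retrace the stem up to the junction, at cost at most $\sum_{\mathrm{stem}} c(e)$. Thus the forward stem charges together with the stem part of $P_1$ already account for the $2\sum_{\mathrm{stem}} c(e)$ that the optimum spends on the stem, while leaving a slack of $2\sum_{\mathrm{stem}} c(e)$ against the target $2\,\texttt{OPT}$.

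The quantitative heart is the cycle contribution. Besides the stem retrace, $P_1$ continues from the junction to the nearest still-unvisited cycle vertex along an already-explored arc, and $P_2$ returns from the last visited cycle vertex to $s$ along a shortest arc. I would prove the crucial lemma that these two arcs together traverse the cycle at most once, i.e.\ their combined cost is at most $\sum_{\mathrm{cycle}} c(e)$; intuitively greedy explores the cycle as a bidirectional frontier grown from the junction, so the arc retraced to re-enter the unexplored part and the arc used for the final return are essentially complementary. In the case where the optimum has Shape 1, combining this with the forward cycle charges gives a total of at most $2\sum_{\mathrm{cycle}} c(e) + 2\sum_{\mathrm{stem}} c(e) \le 2\,\texttt{OPT}$. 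For the remaining case, where the optimum has Shape 2 and skips the expensive edge $e_\infty=e_m$, I would argue that greedy never needs to pay for $e_\infty$ either: since $c(e_\infty)$ exceeds the total weight of the rest of the cycle, going the other way around is always at least as cheap for reaching any unvisited cycle vertex and for the final return, so no forward step and neither $P_1$ nor $P_2$ ever traverses $e_\infty$. The cycle minus $e_\infty$ then behaves like a second dead-end path attached at the junction, which greedy explores and backtracks exactly as it does the stem, and the bound $2\,\texttt{OPT}_2 = \sum_{i=1}^{m-1} 4c(e_i)$ follows with room to spare.

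The main obstacle I expect is the lemma bounding the two cycle arcs by a single cycle traversal. Making it rigorous for arbitrary non-negative weights requires understanding how greedy may \emph{interleave} stem and cycle exploration (rather than descending the stem in one uninterrupted run), where the start vertex $s$ sits relative to the junction, and why the partial arc retraced inside $P_1$ cannot overlap badly with the return arc inside $P_2$. By contrast, the $e_\infty$-avoidance argument and the stem bookkeeping are comparatively routine once this geometric control over the cycle arcs is in place.
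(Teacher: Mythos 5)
Your Shape 1 treatment is essentially the paper's own argument: the same charging framework from \S\ref{subsec:greedy}, the same split into forward edge-charges plus path-charged steps, and the same target of paying the stem portions against $2\sum_{\mathrm{stem}}c(e)$ and the cycle portions against the total cycle weight $C$. Two remarks there. First, the lemma you single out as the ``quantitative heart'' and main obstacle is immediate and needs no complementarity argument: each of the two cycle arcs in question is chosen greedily at a moment when going around the other way is an available alternative, and since the two arcs between any two cycle vertices sum to at most $C$, the chosen one costs at most $0.5C$; the paper bounds each arc by $0.5C$ separately, which also makes the argument insensitive to interleaving. Second, your identification of $P_1$ as ``the backtrack after reaching the end of the stem'' covers only one ordering; the first path charge can equally be triggered on the cycle, when all cycle vertices are visited and the searcher must descend the stem (the paper's case (1b)), and the case where $s$ itself is the stem end has only one path charge. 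Your symmetric accounting survives these cases, but they must be distinguished.

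The genuine gap is Shape 2. Deleting $e_\infty$ does \emph{not} leave ``a second dead-end path attached at the junction'': it leaves a spider with \emph{three} dead ends --- the stem end and the \emph{two} endpoints of $e_\infty$. Consequently the ``at most two path charges'' bookkeeping from \S\ref{subsec:greedy}, on which your whole proposal rests, does not carry over. Worse, under the charging rule of \S\ref{subsec:greedy}, a searcher standing at one endpoint of $e_\infty$ while the other endpoint is unvisited \emph{does} have an unvisited neighbor, so formally this is an edge-charging situation, and the edge charged is $e_\infty$ itself --- a useless charge, since $c(e_\infty)$ may exceed $\texttt{OPT}$ by an arbitrary factor. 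To make your claim that no charge ever involves $e_\infty$ rigorous, you must redefine the moments at which the searcher sits on an endpoint of $e_\infty$ (or on the stem end) as special, path-charged events, and then there can be up to three of them, the third being the final return to $s$. This is exactly what the paper does: it bounds the step following each of the three special moments by $\sum_{i=1}^{m-1}c(e_i)$, adds the forward charges (each edge other than $e_\infty$ charged at most once), and obtains $4\sum_{i=1}^{m-1}c(e_i) = 2\,\texttt{OPT}$. Note that this is an equality with the target bound, not ``room to spare'': Shape 2 is precisely where the factor $2$ in the analysis is used up, so any accounting that forgets one of the three dead ends, or charges $e_\infty$, breaks the theorem rather than merely losing elegance.
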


\begin{proof}
We prove the theorem statement by case distinction for tadpole graphs of Shape 1 or of Shape 2.
We begin with Shape 2.

\medskip

\noindent\textbf{Shape 2.}
We show the theorem statement for Shape 2 by proving that a greedy exploration algorithm incurs a cost of at most
$  2 \left( \sum_{i=1}^{m-1}{2  c(e_i)} \right)$.
First observe that the greedy exploration algorithm will never traverse the edge $e_\infty$, as its weight is greater than the remaining cycle edges combined.

We now consider the incurred costs of the greedy algorithm until it is, at the end of a step, on a vertex incident to $e_\infty$ or on a vertex with no unvisited neighbors.
Note that this situation will happen up to three times, twice when visiting the endpoints of the edge $e_\infty$ and once when on the end of the stem.
Let this situation occur for the first time after ${t_1}$ steps.
So far, in each step ${h}$, $1 \leq h \leq t_1$, we can charge the cost of the step to some edge $e_{a_h} \in E$, with $e_{a_{h}} \neq e_{a_{h'}}$ for $h \neq h'$.\footnote{We introduce the notation $e_{a_h}$ as the edge $e_{a_h}$ can be different from the edge $e_h$. Recall that we either traverse the edge $e_{a_h}$ directly with cost $c(e_{a_h})$ or traverse a path with cost at most $c(e_{a_h})$, see Figure~\ref{fig:fig}.}
Hence, the cost of step ${h}$ is at most $c(e_{a_h})$, with $e_{a_h} \neq e_\infty$, and after ${t_1}$ steps we incur costs of at most $\sum_{i={1}}^{{t_1}}{c(e_{a_i})}$, with $\{e_{a_1},\ldots,e_{a_{t_1}}\} \subset \{e_1,\ldots,e_{m-1}\}$.
Note that if ${t_1}=0$, the cost so far is 0.
We now consider the cost of step ${t_1}+1$. We can upper bound its cost by $\sum_{i=1}^{m-1}{c(e_i)} $, which suffices for the rest of the analysis.

We next consider the second time when we are on one of the endpoints of~$e_\infty$ or on the end of the stem, after $t_2 \geq t_1+1$ steps.
Using our charging technique, we can bound the cost incurred after step $t_1+1$ to and including step $t_2$ by $\sum_{i={t_1+2}}^{{t_2}}{c(e_{a_i})}$, for $t_2 > t_1+1$, with $\{e_{a_1},\ldots,e_{a_{t_1}}\} \cap \{e_{a_{t_1+2}},\ldots,e_{a_{t_2}}\} = \emptyset$ and $\{e_{a_{t_1+2}},\ldots,e_{a_{t_2}}\} \subset \{e_1,\ldots,e_{m-1}\}$, as we charge every edge at most once.
%
%
If $t_2 = t_1+1$, then this cost is 0.
For step $t_2+1$, we again upper bound its cost being at most $\sum_{i=1}^{m-1}{c(e_i)} $, which suffices for the rest of the analysis.
We now consider the third and last time when we are on one of the endpoints of~$e_\infty$ or on the end of the stem, after $t_3 \geq t_2+1$ steps.
As above, we can bound the cost incurred after step $t_2+1$ to and including step $t_3$ by $\sum_{i={t_2+2}}^{{t_3}}{c(e_{a_i})}$, for $t_3 > t_2+1$, with $\left (\{e_{a_1},\ldots,e_{a_{t_1}}\} \cup \{e_{a_{t_1+2}},\ldots,e_{a_{t_2}}\} \right ) \cap \{e_{a_{t_2+2}},\ldots,e_{a_{t_3}}\} = \emptyset$ and $\{e_{a_{t_2+2}},\ldots,e_{a_{t_3}}\} \subset \{e_1,\ldots,e_{m-1}\}$, as again, we charge every edge at most once.
If $t_3 = t_2+1$, then this cost is 0.
Once both endpoints of $e_\infty$ and the end of the stem are visited (the case after step $t_3$), it remains to return to $s$, i.e., in step $t_3+1$.
We again upper bound its cost to being at most $\sum_{i=1}^{m-1}{c(e_i)} $.
When we sum up all costs, we incur the cost of each edge $e \neq e_\infty$ at most once, plus $3  \left( \sum_{i=1}^{m-1}{c(e_i)} \right )$.
In other words, the greedy algorithm has a cost of at most $4  \left( \sum_{i=1}^{m-1}{c(e_i)} \right ) =  2  \left( \sum_{i=1}^{m-1} 2 {c(e_i)} \right ) $, which completes the analysis for Shape~2.\footnote{Recall that we never charge an edge twice and for Shape 2, we never traverse $e_\infty$.}
It remains to investigate Shape 1.

\medskip

\noindent\textbf{Shape 1.}
We show the theorem statement for Shape 1 by proving that a greedy algorithm incurs a cost of at most
$  2 \left( \sum_{i=1}^j{2  c(e_i)} + \sum_{i=j+1}^m{1  c(e_i)} \right)$, where the edges $e_{1},\ldots,e_{j}$ form the stem of the tadpole graph.
%
%
%
We will analyze Shape~1 by case distinction, depending which of the following situations occur:
\begin{enumerate}[leftmargin=3.5em]
	\item[(1)] The starting vertex $s$ is not the end of the stem.
	\item[(2)] The starting vertex $s$ is the end of the stem.
\end{enumerate}

\noindent Case~(1) is divided into two subcases:

\begin{enumerate}[leftmargin=3.5em]
	\item[(1a)] The searcher visits the end of the stem before visiting all vertices on the~cycle.
	\item[(1b)] The searcher visits all vertices on the cycle before visiting the end of the~stem.
\end{enumerate}

The underlying idea for this case distinction is as follows: until the searcher visits the end of the stem or has visited all vertices in the cycle, there is always an unvisited neighbor $w$ such that from the current vertex $v$, we can charge the so far uncharged edge $e=(v,w)$ for the cost of the next step with cost at most~$c(e)$.

\begin{itemize}
	\item 
	We start with Case (1a), where the searcher visits the end of the stem (not being $s$), before visiting all vertices on the cycle.
	\quad In this case, assume the searcher performed $t_1$ steps so far. 
	In each step $1 \leq h \leq t_1$, the searcher can charge the cost of the step to an incident edge $e_{a_h}$, where $e_{a_h} \neq e_{a_h'}$ for $h \neq h'$, i.e., with a cost of at most $c(e_{a_h})$.
	\quad Next, consider the step $t_2 > t_1$ where the searcher visits for the first time after step $t_1$ a yet unvisited vertex $v_{t_2}$ on the cycle, possibly the junction $t$.
	The cost of the steps $t_1+1,\ldots,t_2$ is exactly the cost of traversing all stem edges once $(c(e_{1})+\ldots+c(e_{j}))$ plus the cost of traversing the shortest path between $t$ and $v_{t_2}$, being zero cost for $t = v_{t_2}$ and else consisting of the pairwise distinct cycle edges $e_{b_1},\ldots,e_{b_k}$.
	\quad We now consider the steps $t_2+1$ to $t_3$, where after step $t_3$ the searcher has no incident edge to charge, i.e., all vertices on the cycle are visited. Using the charging scheme, the cost of the steps can be bounded by the summed up cost of $t_3-t_2$ further pairwise distinct cycle edges $e_{a_{t_2+1}},\ldots,e_{a_{t_3}}$, with $\{e_{a_{t_2+1}},\ldots,e_{a_{t_3}}\} \cap \{ e_{a_{1}},\ldots,e_{t_1} \} = \emptyset$, as here we just charge edges we have not charged in the steps $1, \ldots, t_1$.
	\quad It remains to return to $s$ in step $t_3+1$. Let the summed up cost of all cycle edges be $C$. Observe that we can bound the cost of step $t_3+1$ by the summed up cost of all stem edges $e_{1},\ldots,e_j$ plus at most $0.5  C$. The latter holds as the shortest path between any two vertices on the cycle has always a cost of at most $0.5  C$. Analogously, it holds that $0.5  C \geq \sum_{i=1}^{k}{c(e_{b_i})}$.
	\quad We now consider the total cost incurred in this case, where it holds that $\sum_{i=1}^{t_1}{c(e_{a_i})}+\sum_{i=t_2+1}^{t_3}{c(e_{a_i})} \leq \sum_{i=1}^{m}{c(e_i)}$, as we charge each tadpole edge at most once in the steps $1,\ldots,t_1,t_2+1,\ldots,t_3$.
	For the steps $t_1+1$ to~$t_2$, the total cost is $\sum_{i=1}^{j}{c(e_i)}+\sum_{i=1}^{k}{c(e_{b_i})}$, with $\sum_{i=1}^{k}{c(e_{b_i})} \leq 0.5  C$. With the final step $t_{3}+1$ incurring a cost of at most $0.5  C+ \sum_{i=1}^{j}{c(e_i)}$ it holds that:
	$\sum_{i=1}^{m}{c(e_i)} + \sum_{i=1}^{j}{c(e_i)} + 0.5  C + 0.5  C+ \sum_{i=1}^{j}{c(e_i)} 
	= 
	\sum_{i=1}^{m}{c(e_i)} +2 \left( \sum_{i=1}^{j}{c(e_i)} \right) + \sum_{i=j+1}^{m}{c(e_i)}
	\leq 2 \left( \sum_{i=1}^j{2  c(e_i)} + \sum_{i=j+1}^m{1  c(e_i)} \right)$~.

	\item We now cover the Case (1b), where we do not start at the end of the stem, and visit all vertices on the cycle before visiting the end of the stem.
	Assume that the searcher so far performed $t_1$ steps, where we charge $t_1$ pairwise distinct edges for a total cost of at most $\sum_{i=1}^{m}{c(e_i)}$. 
	\quad Next, assume that the searcher reached the end of the stem by step $t_2$. Observe that in the steps $t_1+1,\ldots,t_2$, the searcher simply walks to the end of the stem on a simple shortest path, as all cycle vertices are already explored. The incurred costs are at most $0.5  C + \sum_{i=1}^{j}{c(e_i)}$. 
	\quad It remains to return to the starting vertex $s$. Again, we can bound the cost of doing so by at most $0.5  C + \sum_{i=1}^{j}{c(e_i)}$.
\quad The total incurred costs are at most $\sum_{i=1}^{m}{c(e_i)} + 2 \left(  \sum_{i=1}^{j}{c(e_i)} \right) + C = 
%
3 \left( \sum_{i=1}^{j}{c(e_i)} \right) +  2 \left( \sum_{i=j+1}^{m}{c(e_i)} \right)
\leq 2 \left( \sum_{i=1}^j{2  c(e_i)} + \sum_{i=j+1}^m{1  c(e_i)} \right)$~.

	\item It remains to cover the Case (2) where the searcher starts on the end of the stem. In this case, it will take $n-1$ steps until the last vertex (on the cycle) is visited. For each step, a different edge will be charged, leading to a cost bounded by $\sum_{i=1}^{m}{c(e_i)}$. Afterwards, the searcher returns to $s$ on a shortest path, where we can again bound the cost by $\sum_{i=1}^{m}{c(e_i)}$.
\end{itemize}


As such, we showed 2-competitiveness for all cases in both Shape~1 and Shape~2.
\end{proof}

\section{Beyond 2-Competitiveness with Advice}\label{sec:advice}
In order to obtain an improved competitive ratio, one can leverage the model of \emph{advice complexity} for graph exploration~\cite{DBLP:conf/waoa/BockenhauerFU18,DBLP:journals/corr/abs-1804-06675}.
Herein, an all-knowing oracle can equip the searcher with a bit string before initial execution (the advice), which the exploration algorithm can leverage for improved performance.

\smallskip
\noindent\textbf{2 bits of advice for tadpole graphs.}
Already 2 bits of advice suffice to obtain a competitive ratio of $\frac{1+\sqrt{3}}{2}\approx 1.366$, where, if the searcher does not start on the junction, the first bit indicates if the searcher starts on the stem  or not.
In case the searcher does not start on the stem and not on the junction, then the second bit can be used to indicate which neighboring vertex of the junction forms the beginning of the stem, as the vertices have unique identifiers.
To this end, as the oracle knows the deterministic algorithm of the searcher, it knows which two choices remain at the junction, where a choice from a local ordering of the 2 identifiers can be performed with 1 bit.
Lastly, if the searcher starts on the junction, then the 2 bits can be used to indicate which choice leads to the stem.
Equipped with these 2 bits, the searcher acts as follows. 
When starting on the stem (not on the junction), one proceeds in an arbitrary direction until either the end of the stem is reached, in which case one turns around towards the junction, or until the junction is reached. In that case, the $\frac{1+\sqrt{3}}{2}$ competitive cycle exploration algorithm by Miyazaki et al.~\cite{DBLP:journals/ieicet/MiyazakiMO09} is used (where we use the junction vertex $t$ as the starting point), after which the searcher proceeds on a shortest path to $s$, possibly after visiting the end of the stem and returning to $s$. By doing so, the searcher traverses every edge of the stem twice (which is optimal), and explores the cycle itself with a competitive ratio of $\frac{1+\sqrt{3}}{2}$.
Hence, the total competitive ratio is at most $\frac{1+\sqrt{3}}{2}$.
When not starting on the stem and not on the junction, the searcher runs the $\frac{1+\sqrt{3}}{2}$-competitive algorithm of Miyazaki et al.~\cite{DBLP:journals/ieicet/MiyazakiMO09} as well, but when reaching the junction for the first time, the searcher explores the stem by traversing every stem edge twice (using the second advice bit), and then continues with the cycle exploration. Again, the competitive ratio is at most $\frac{1+\sqrt{3}}{2}$.
When starting on the junction, the searcher uses the 2 bits to explore the stem first and return to $s$ (each stem edge twice) and then uses the $\frac{1+\sqrt{3}}{2}$-competitive algorithm of Miyazaki et al.~\cite{DBLP:journals/ieicet/MiyazakiMO09} for the cycle exploration, finishing the case distinction. We cast our insights into the following theorem:

\begin{theorem}
2 bits of advice suffice to explore weighted tadpole graphs with a competitive ratio of $\frac{1+\sqrt{3}}{2}$.
\end{theorem}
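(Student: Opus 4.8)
The plan is to prove the theorem by a case distinction on where the searcher starts, exactly mirroring the decomposition of the optimal tour into a cycle-exploration part and a stem-traversal part. The key structural fact I would exploit is the one already established for the shape of the optimal tour: the optimum traverses every stem edge exactly twice and explores the cycle optimally. Since the Miyazaki et al.\ algorithm achieves competitive ratio $\frac{1+\sqrt 3}{2}$ on cycles, the strategy is to arrange the searcher's behaviour so that the stem is always traversed exactly twice (which is optimal, contributing a factor $1$) while the cycle is explored by invoking their algorithm (contributing the factor $\frac{1+\sqrt 3}{2}$). Because these two parts of the tour are edge-disjoint and both the searcher's cost and the optimal cost split additively across them, I would argue that the overall competitive ratio is the maximum of the two component ratios, namely $\frac{1+\sqrt 3}{2}$.

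First I would handle the encoding of advice and confirm that $2$ bits suffice, as sketched in the text preceding the statement: the first bit distinguishes whether the start vertex lies on the stem, and the second bit, when needed, selects which of the two junction-incident edges begins the stem. The crucial point here is that the oracle knows the deterministic algorithm and the unique vertex identifiers, so a single bit disambiguates a binary local choice at the junction; the case of starting on the junction uses both bits to pick the stem direction. I would state this encoding precisely and verify each of the three start-location scenarios (on the stem but not the junction; neither on the stem nor the junction; on the junction) is covered.

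Next I would carry out the cost accounting in each of the three cases. In every case the searcher's walk decomposes into (i) a portion that traverses the stem edges, each exactly twice, and (ii) a portion in which the cycle is explored via the Miyazaki et al.\ routine started from the junction $t$. Letting $S = \sum_{\text{stem}} c(e)$ and $\mathrm{OPT}_{\mathrm{cyc}}$ denote the optimal cycle-exploration cost, the optimum is $2S + \mathrm{OPT}_{\mathrm{cyc}}$, while the searcher pays $2S + R\cdot \mathrm{OPT}_{\mathrm{cyc}}$ with $R = \frac{1+\sqrt 3}{2}$. Since $R>1$, the ratio $\frac{2S + R\cdot \mathrm{OPT}_{\mathrm{cyc}}}{2S + \mathrm{OPT}_{\mathrm{cyc}}}$ is at most $R$. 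I would present this inequality once and then note it applies uniformly across the cases.

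The main obstacle I anticipate is the interleaving of the stem traversal with the cycle exploration without inflating the cost bound: when the searcher encounters the junction $t$ mid-cycle-exploration and detours down the stem and back, one must argue that this detour costs exactly $2S$ and does not disturb the competitive guarantee of the cycle subroutine, which was analysed assuming an uninterrupted cycle exploration from $t$. The cleanest argument is that the stem detour is a closed sub-walk based at $t$ costing precisely $2S$, so it is additively separable from the cycle portion and the Miyazaki et al.\ bound on the cycle part is untouched. Similarly, when the searcher starts on the stem, I would confirm that the initial walk to an end of the stem and back to the junction, plus the final return path to $s$, together traverse each stem edge exactly twice, so the total stem cost remains $2S$. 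Verifying this bookkeeping carefully in each start-location case is the delicate part; the rest is the routine additive inequality above.
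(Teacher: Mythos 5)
Your proposal is correct and follows essentially the same route as the paper: the identical two-bit encoding over the three start-location cases, the decomposition of the tour into a stem part traversed exactly twice and a cycle part handled by the Miyazaki et al.\ subroutine from the junction, and the observation that the ratio $\frac{2S + R\cdot \mathrm{OPT}_{\mathrm{cyc}}}{2S + \mathrm{OPT}_{\mathrm{cyc}}} \leq R$ since $R > 1$. The additive-separability point you flag as delicate (the stem detour being a closed sub-walk at $t$ of cost exactly $2S$) is exactly what the paper relies on implicitly, and your explicit treatment of it is sound.
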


\smallskip

Before giving an advice scheme that provides optimal exploration for tadpole graphs, we first present how to obtain optimal exploration for cycles.

\smallskip

\noindent\textbf{Optimality for cycles.} 
%
%
In order to obtain $1$-competitiveness for cycles, it suffices to mark the edge~$e_\infty$ for Shape 2 (traverse all edges except $e_\infty$ twice) respectively to indicate if the current graph is of Shape 1 (traverse all edges in the cycle exactly once). 
As the oracle knows all actions of the deterministic searcher ahead of time, it knows in which order the $m$ edges will be made visible to the searcher. 
In each step, only one edge will be revealed, except at the start, where we assume that the edge incident to the neighbor with lower identifier is the first in this order.
We now show how $\lceil\log_2(n)\rceil$ bits suffice to mark the edge and also indicate Shape 1, on cycle graphs, where $n=m$ holds.
In order to indicate Shape 1, we simply mark an edge incident to $s$ with lowest weight, which cannot be $e_\infty$, as $e_\infty$ has a greater weight than the remaining cycle, and the searcher traverses every edge once, being optimal for Shape 1.
Else, the searcher knows that Shape 2 was indicated, and traverses the cycle in a deterministic direction (e.g., starting with the neighbor with a lower identifier), until the edge $e_\infty$ is revealed. Then, the searcher turns around, explores the cycle until the edge $e_\infty$ appears again, and returns to the start, being optimal for Shape 2.
Hence the following theorem holds:

\begin{theorem}
$\lceil \log_2(n)\rceil$ bits of advice suffice to explore weighted cycle graphs with a competitive ratio of $1$.
\end{theorem}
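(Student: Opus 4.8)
The plan is to exhibit an advice scheme using exactly $\lceil\log_2(n)\rceil$ bits and to argue it achieves $1$-competitiveness on every weighted cycle. The key observation, already established in the shape analysis of the previous section, is that an optimal cycle tour has exactly one of two forms: either Shape~1, where every edge is traversed once (this is feasible only when the cheapest closed walk returns along the same direction, i.e.\ a back-and-forth traversal is as good as going around), or Shape~2, where all edges but the single heaviest edge $e_\infty$ are traversed twice and $e_\infty$ is skipped. Since on a cycle $n=m$, the whole optimal tour is thus pinned down by a single piece of information: either ``use Shape~1'' or ``skip this particular edge $e_\infty$.'' The entire advice budget is therefore spent encoding the identity of one edge out of $m=n$ candidates, which needs $\lceil\log_2(n)\rceil$ bits.

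The encoding step relies on the oracle knowing the deterministic searcher's behavior in advance. First I would fix a canonical ordering of the $m$ edges: at the start both edges incident to $s$ are visible, and we declare the one incident to the lower-identifier neighbor to be edge number $1$; thereafter exactly one new edge is revealed per step, giving a well-defined index $1,\dots,m$ to every edge. The oracle can compute this ordering offline. To signal Shape~2 it simply writes the index of $e_\infty$ in binary using $\lceil\log_2(n)\rceil$ bits. To signal Shape~1 it instead writes the index of a minimum-weight edge incident to $s$; the crucial point here is that this index is \emph{distinguishable} from any legitimate ``$e_\infty$'' index, because $e_\infty$ has weight strictly greater than the rest of the cycle combined and hence cannot be a lightest edge incident to $s$. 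So the searcher, upon decoding an index, checks whether the indicated edge is a lightest edge at $s$: if so, it reads this as the Shape~1 instruction; otherwise it treats the indexed edge as $e_\infty$.

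Given the decoded instruction the searcher's execution is immediate. In the Shape~1 case it traverses the cycle in a fixed direction, visiting every vertex and returning to $s$ after crossing every edge exactly once, matching the Shape~1 optimum $\sum_{i=1}^{m} c(e_i)$. In the Shape~2 case it walks in the canonical direction until $e_\infty$ is revealed, then reverses, sweeps back past $s$ and onward until $e_\infty$ is seen from the other endpoint, and returns; this visits all vertices while traversing each edge except $e_\infty$ exactly twice, matching the Shape~2 optimum $\sum_{i \neq \infty} 2\,c(e_i)$. In both cases the realized cost equals $\mathrm{cost}_{\mathrm{opt}}$, giving competitive ratio $1$.

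The only genuinely delicate point—and the step I expect to be the main obstacle—is guaranteeing that the Shape~1 marker cannot be misread as an $e_\infty$ marker, i.e.\ that the two meanings occupy disjoint index ranges or are otherwise unambiguously separable. The clean resolution is the weight argument above: because by the tie-breaking convention Shape~2 is used only when $e_\infty$ is strictly heavier than all other cycle edges together, $e_\infty$ can never be a lightest edge incident to $s$, so a ``lightest-at-$s$'' index unambiguously encodes Shape~1. One should also confirm the corner case where both edges at $s$ tie for minimum weight (either still works as the Shape~1 marker and neither is $e_\infty$) and verify that when the searcher decodes an index it can indeed check the lightest-at-$s$ condition from local information available at the very first vertex, which it can, since both incident edge weights are revealed upon arrival at $s$.
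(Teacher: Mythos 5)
Your proposal is correct and follows essentially the same route as the paper's own proof: the same revelation-order indexing of the $m=n$ edges, the same encoding (index of $e_\infty$ for Shape~2, index of a lightest edge incident to $s$ for Shape~1), the same disambiguation via the fact that $e_\infty$ cannot be a lightest edge at $s$, and the same turn-around traversal strategy for Shape~2. You merely spell out the decoding check and the tie-breaking corner case at $s$ a bit more explicitly than the paper does.
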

The usage of 2 additional bits, as before, could now be leveraged to obtain a competitive ratio of 1 on tadpole graphs, i.e., with $\lceil \log_2(n)\rceil + 2 $ bits.
However, we can improve this result to only require 1 extra bit.

\medskip

\noindent\textbf{Optimality for tadpoles.} 
We now assume that Shape 1 is indicated if all bits from the first $\lceil \log_2(n)\rceil$ bits of advice are set to zero, and that else Shape 2 is indicated to the searcher.\footnote{Unlike in cycles, in tadpole graphs, the starting node $s$ could be on the stem and hence marking the cheaper incident edge for Shape 1 does not suffice, as both incident edges could have the same weight.}
We now show how $\lceil \log_2(n)\rceil+1$ bits of advice suffice for optimality on tadpole graphs.

\smallskip

In the case of Shape 2, the searcher can virtually remove $e_{\infty}$ from the graph, as for the cycle, and the remaining graph essentially consists of three stems. Note that due to using $\lceil \log_2(n)\rceil+1$ bits, we have sufficient bits to indicate which revealed edge is $e_{\infty}$, even while reserving advice with the first $\lceil \log_2(n)\rceil$ bits as zero for Shape 1.
The searcher can detect all three ends of the stems (one is of degree 1 and the other two are endpoints of $e_{\infty}$). 
When the searcher reaches each end of the stem, the searcher simply backtracks. 
At the junction, the searcher may encounter two unknown edges, but either one may be traversed first, so no additional advice is needed.

\smallskip

In the case of Shape 1, the first $\lceil \log_2(n)\rceil$ bits of advice are set to zero (there is no edge to avoid traversing) and we need to use one additional bit of advice at the junction, to indicate in which direction (of the two) the searcher should proceed next. 
The searcher's strategy is as follows:
\begin{itemize}
	\item At the beginning, the searcher departs in any direction.
	\item When the searcher is at the junction for the first time after the beginning, the searcher follows the one bit of advice.
	\item When the searcher reaches the end of the stem, the searcher backtracks.
\end{itemize}
In case the starting vertex is on the stem (except for the junction), the extra bit of advice may be arbitrary (or may be ignored), because both of the new two edges at the junction are on the cycle.
In case the starting vertex is on the cycle (except for the junction), the advice bit indicates the edge on the stem and the searcher traverses the stem first. 
Finally, suppose that the starting vertex is the junction. If the edge chosen at the beginning is on the stem, then the searcher will know this fact when he reaches the end of the stem.
If the chosen edge is on the cycle, then he will know this fact when he comes back to the junction, so he may go towards the third (unknown) edge. In either case, the additional one bit may be ignored.
We again cast our insights into the following theorem:

\begin{theorem}
$\lceil \log_2(n) \rceil +1 $ bits of advice suffice to explore weighted tadpole graphs with a competitive ratio of $1$.
\end{theorem}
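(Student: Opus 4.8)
The plan is to reuse the cycle advice scheme of the previous theorem and spend exactly one extra bit to resolve the single remaining ambiguity introduced by the junction. As before, I would exploit that the oracle, knowing the deterministic searcher, can predict the order in which the $m=n$ edges are revealed, and therefore address any edge by its rank in this order using $\lceil\log_2 n\rceil$ bits. I would reserve the pattern in which the first $\lceil\log_2 n\rceil$ bits are all zero to signal Shape~1, and otherwise read the advice as the index of $e_\infty$, signalling Shape~2. The one extra bit is what makes this clash-free: the full $\lceil\log_2 n\rceil+1$ bits give at least $2n$ patterns, of which only the two with an all-zero prefix are spent on the Shape~1 flag, leaving at least $2n-2\geq n$ patterns to name any of the candidate edges for $e_\infty$.

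For Shape~2 I would argue that deleting $e_\infty$ turns the tadpole into a tree: the cycle is cut into a path through the junction $t$, which together with the attached stem forms a three-legged spider meeting at $t$. Since an optimal Shape~2 tour never uses $e_\infty$ and traverses every other edge exactly twice, it coincides with a depth-first traversal of this tree, which visits each edge exactly twice and returns to $s$. The searcher, knowing $e_\infty$ from the advice, simply performs this traversal, backtracking both at the degree-one stem end and at either endpoint of $e_\infty$; at $t$ the two pending branches may be explored in either order, because a tree depth-first search charges each edge twice regardless, so no advice is consumed there. The resulting cost is exactly $\sum_{i=1}^{m-1}2\,c(e_i)$, giving ratio $1$.

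For Shape~1 the target tour traverses the cycle once and the stem twice, and I would have the searcher walk forward, backtracking only at the stem end, and consult the single extra bit the first time it stands at $t$. The crux is a short case analysis on the location of $s$. If $s$ lies on the cycle, the bit must point to the stem edge, so that the searcher detours down the stem and back while passing through $t$ and then closes the cycle; this is the one genuinely necessary use of the bit, since continuing along the cycle instead would force an expensive later return to the stem. If $s$ lies on the stem, both pending edges at $t$ are cycle edges, so either choice yields a single full loop of the cycle and the bit is irrelevant. If $s=t$, the searcher deduces the type of its first edge (a dead end means the stem, a return to $t$ means a completed cycle) and again ignores the bit. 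In each subcase I would check that the stem is paid for exactly twice and the cycle exactly once, matching $\sum_{i=1}^{j}2\,c(e_i)+\sum_{i=j+1}^{m}c(e_i)$.

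I expect the main obstacle to be bookkeeping rather than conceptual depth, and to be twofold. First, making the encoding argument airtight so that the Shape~1 flag and every possible index of $e_\infty$ fit without collision into $\lceil\log_2 n\rceil+1$ bits, including the boundary case where $n$ is a power of two. Second, verifying in Shape~1 that the naive forward-walk-with-one-bit strategy really does traverse the cycle exactly once in all starting configurations; in particular I would argue carefully why a single bit suffices even though the searcher may meet the junction more than once, isolating the one starting position (on the cycle) where the bit is truly needed and confirming that the remaining positions are either symmetric or self-determining.
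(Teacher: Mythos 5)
Your proposal is correct and follows essentially the same route as the paper: the all-zero prefix of the first $\lceil\log_2 n\rceil$ bits signals Shape~1 (otherwise they index $e_\infty$), Shape~2 is handled by deleting $e_\infty$ and traversing the resulting three-legged tree with backtracking at its ends, and Shape~1 uses the single extra bit at the junction with the same three-way case analysis on whether $s$ lies on the cycle, on the stem, or is the junction itself. Your explicit pattern-counting argument ($2n-2\geq n$ remaining codewords) is a slightly more careful justification of a step the paper merely asserts, but the construction is identical.
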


\section{Conclusion and Outlook}\label{sec:conclusion}
We studied the online exploration of tadpole graphs, showing that a greedy exploration achieves a competitive ratio of 2.
Our results also hold on weighted graphs, with a matching lower bound of $2-\varepsilon$ already on unit weight tadpoles.
The latter extends the results of Miyazaki et al.'s~\cite[\S~4.2]{DBLP:journals/ieicet/MiyazakiMO09} lower bound, which required the graph to be either a tree or a tadpole.

Moreover, we have also presented the first non-trivial graph class where a greedy graph exploration is optimal from a competitive point of view, and provided first insights into the advice complexity of cycle and tadpole graphs.

We see our work as a step towards charting the landscape of graph exploration, a problem which continues 
 to puzzle researchers for many decades.
Next research directions could be the investigation of more broad graph classes, such as lollipop graphs, barbell graphs, or wheel graphs, or the bound improvements on unicyclic and cactus graphs, with the hope that a better understanding of these specific graph structures will bring us closer to bridging the large gap for exploration on general graphs.

\vspace{0.5cm}
\noindent\textbf{Bibliographical note.}\\
A preliminary version of the results in this article appeared in~\cite{Maurer2015}.

\vspace{0.5cm}
\noindent\textbf{Acknowledgements.}\\
We would like to thank Nicole Megow for her feedback on the related work.
We would also like to thank anonymous reviewers for their helpful comments, which among other details, lead to a better presentation of the upper bound proof. Moreover, it was also pointed out to us that Miyazaki et al.'s~\cite[\S~4.2]{DBLP:journals/ieicet/MiyazakiMO09} lower bound proof required that the graph be either a tadpole or a tree, and we could hence strengthen their result such that the lower bound construction only requires tadpole graphs.
Lastly, an anonymous reviewer also provided us with a proof on how the advice complexity for tadpole graphs could be lowered from $\lceil \log_2(n)\rceil+2$ to $\lceil \log_2(n)\rceil+1$ bits, for which we would like to express our thanks as well.

\bibliographystyle{elsarticle-num}
\bibliography{references}

\end{document}